\definecolor{myurlcolor}{rgb}{0,0,0.4}
\definecolor{mycitecolor}{rgb}{0,0.5,0}
\definecolor{myrefcolor}{rgb}{0.5,0,0}
\newcommand*{\addFileDependency}[1]{
  \typeout{(#1)}
  \@addtofilelist{#1}
  \IfFileExists{#1}{}{\typeout{No file #1.}}
}
\newcommand*{\myexternaldocument}[1]{
    \externaldocument{#1}
    \addFileDependency{#1.tex}
    \addFileDependency{#1.aux}
}
\newcommand{\beq}[0]{\begin{equation}}
\newcommand{\eeq}[0]{\end{equation}}
\newcommand{\one}{\leavevmode\hbox{\small1\normalsize\kern-.33em1}}
\def\be{\begin{equation}}
\def\ee{\end{equation}}
\def\ben{\begin{eqnarray}}
\def\een{\end{eqnarray}}
\def\eea{\end{array}}
\def\bea{\begin{array}}
\newcommand{\Tr}[1]{\mathrm{Tr}#1}
\newcommand{\bei}{\begin{itemize}}
\newcommand{\eei}{\end{itemize}}
\newcommand{\ket}[1]{|#1\rangle}
\newcommand{\bra}[1]{\langle#1|}
\newcommand{\proj}[1]{\ket{#1}\!\!\bra{#1}}
\newcommand{\I}{\mathbbm{1}}
\renewcommand{\emph}[1]{\textbf{#1}}
\newtheorem*{rep@theorem}{\rep@title}
\newcommand{\newreptheorem}[2]{%
\newenvironment{rep#1}[1]{%
 \def\rep@title{#2 \ref{##1}}%
 \begin{rep@theorem}}%
 {\end{rep@theorem}}}
\theoremstyle{plain}
\newtheorem*{thm*}{Theorem}
\newtheorem{fakt}{Fact}
\theoremstyle{definition}
\theoremstyle{remark}
\begin{document}

\title{Distrustful quantum steering}
\author{Shubhayan Sarkar}
\email{shubhayan.sarkar@ulb.be}
\affiliation{Laboratoire d’Information Quantique, Université libre de Bruxelles (ULB), Av. F. D. Roosevelt 50, 1050 Bruxelles, Belgium}

\begin{abstract}	
Quantum steering is an asymmetric form of quantum nonlocality where one can trust the measurements of one of the parties. In this work, inspired by practical considerations we investigate the scenario if one can not fully trust their measurement devices but only up to some precision. We first find the effect of such an imprecision on standard device-dependent quantum tomography. We then utilise this result to compute the variation in the local bound of any general steering inequality depending on the amount of trust one puts in one of the party's measurement devices. This is particularly important as we show that even a small distrust on Alice might cause the parties to observe steerability even if the quantum state is unsteerable. Furthermore, this effect becomes more relevant when observing higher dimensional quantum steering. 
\end{abstract}


\maketitle

\section{Introduction} The existence of nonlocal correlations is one of the most fascinating aspects of quantum theory. An asymmetric form of quantum nonlocality is termed quantum steering where one observes the change of a local quantum state depending on the operations carried out by another space-like separated observer and the classical information shared
between the parties. The idea of quantum steering was conceived by Schr$\mathrm{\ddot{o}}$dinger in 1935 but was mathematically formalised after almost seventy years \cite{WJD07, CJWR09}.
Contrary to standard Bell scenario \cite{bel64}, in a steering scenario Alice on one side has complete knowledge about her measurement device, thus known as the trusted side, however for Bob, it is still a black box [see Refs. \cite{CS16, SteeringReview} for a review on quantum steering].

The trust in one of the parties makes quantum steering more robust to noise and detector inefficiencies when compared to Bell nonlocality  \cite{ GCH+19,DSU+21, PCS+15, LTB+14}. Consequently, it is much more simpler to observe quantum steering than Bell nonlocality in higher dimensional systems. Several steering witnesses have been proposed for higher dimensional systems \cite{Horodecki, SC18, Caval1,sarkar6, sarkar7}. Higher dimensional quantum steering has also been observed in experiments \cite{steexp1,steexp2,steeexp3}.

To observe quantum steerability in experiments, one needs to completely trust at least one of the measurement devices. However, one can not be completely confident that the measurements implemented are exactly the expected ones. This is not just limited by technical advancements but also some intrinsic properties of the device. For instance, in a photon-counting device, one of the significant issues is the dark count, that is, the firing of the detector based on the thermal noise of the environment \cite{darkcount1, darkcount2}. Furthermore, the detectors might not register a photon, referred to as no-click events. Thus, one can only trust their measurement device up to some degree of error. Consequently, it might well happen that Alice's imprecise measurement allows her to observe quantum steerability even if the quantum state is unsteerable. This would be catastrophic for applications of quantum steering for instance in one-sided quantum key distribution \cite{BCW+12, Xin:20, 1sdikey} or one-sided device-independent certification \cite{Supic, Alex, SBK20, Chen, sarkar6, sarkar7, sarkar8}.

In this work, we investigate the effect of not completely trusting Alice in the quantum steering scenario. We consider a general steering functional and compute the variation of its local bound with respect to the trust in Alice. For a note, a similar problem was considered in \cite{Rosset2012, Arminc} for entanglement witnesses where all the measurement devices are trusted. We observe that the effect of trust on the local bound becomes more significant for higher dimensional systems. For our purpose, we first compute the effect of trusting the measurement devices for standard quantum tomography. Not to our surprise, we observe that identifying higher dimensional quantum states using tomography is much more difficult. However, we observe that the error in identifying the state varies almost linearly with respect to trust one puts in their device. Thus, reducing the error in measurement devices considerably improves quantum state tomography. We then consider a specific example of a family of steering inequalities for arbitrary dimensional systems and show that to observe quantum steerability even for ten-dimensional systems would require trusting Alice close to ninety-nine per cent. 

\section{Preliminaries}
Before proceeding towards the main result of this work, let us introduce the notion of quantum steering.

{\it{Quantum steering---}} In this work, we consider the simplest scenario exhibiting quantum steerability, which consists of two parties namely, Alice and Bob who are in two different separated labs and perform some operations on an unknown system which they receive from a preparation device. Bob can choose among $n$ measurements denoted by $B_y$ where $y\in\{1,2,\ldots,n\}$ each of which results in $d$ outcomes labeled by $b\in{0,1,\ldots,d-1}$. The measurement performed by Bob steers the received system with Alice. The steered states are positive operators denoted by $\sigma_{b|y}\in \mathcal{H}_A$ and the collection of these steered states $\sigma=\{\sigma_{b|y}\ s.t.\ b=0,1,\ldots,d-1, y=1,2,\ldots,n\}$ are called assemblages. If the assemblage has a quantum realization then every steered state can be written as, $\sigma_{b|y}=\Tr_B((\I\otimes N_{b|y})\rho_{AB})$ where $\rho_{AB}\in\mathcal{H}_A\otimes\mathcal{H}_B$ denotes the shared state generated by the preparation device. Bob's measurements have a quantum realization that guarantees that $N_{b|y}$ are positive operators and $\sum_{b}N_{b|y}=\I$ [see Fig. \ref{fig1}].

\begin{figure}[t]
    \centering
    \includegraphics[scale=.4]{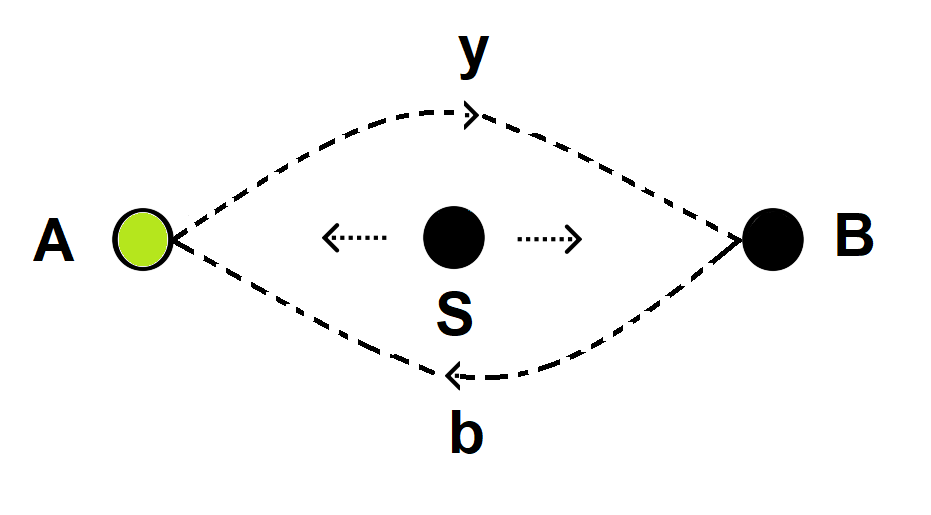}
    \caption{Quantum steering scenario. Alice and Bob are spatially separated and receive subsystems from the source $S$. Alice is trusted here. Alice then sends input $y$ to Bob based on which he performs a measurement on his subsystem and obtains an outcome $b$ which is then sent back to Alice. Since Alice is trusted she then performs a tomography on her subsystem to recover the local assemblage.}
    \label{fig1}
\end{figure}

Alice has complete knowledge about her lab which guarantees that there is no external intruder and her measurement statistics are generated only by the received system. This allows her to simply do a tomography of the received subsystem. If the shared state is not steerable, then the assemblage has a local hidden state model(LHS) defined as,
\begin{eqnarray}\label{LHS1}
\sigma_{b|y}=\sum_\lambda q_\lambda p_\lambda(b|y)\sigma_\lambda
\end{eqnarray}
where $\sum_{\lambda}q_\lambda=1$, $p_\lambda(b|y)$ are probability distribution over $\lambda$ and $\sigma_\lambda$ are density matrices over $\mathcal{H}_B$.

As Alice can perform topographically complete measurements on $\sigma_{b|y}$, in general, quantum steering is witnessed by so-called `steering functional', a map from the assemblage $\{\sigma_{b|y}\}$ to a real number which is defined by a set of matrices $F_{b|y}$ that maps the assemblage to a real number as
\begin{eqnarray}\label{stefn}
    W=\sum_{b,y}\Tr(F_{b|y}\sigma_{b|y}).
\end{eqnarray}
The maximum value $\beta_{L}$ of $W$ that is achievable if the assemblage $\{\sigma_{b|y}\}$ can be described using a LHS model  \eqref{LHS1} is referred to as local bound. Let us now consider the scenario if Alice is not completely trusted.

\section{Results}

{\it{Distrustful quantum steering---}} Consider again the scenario depicted in Fig. \ref{fig1} without completely trusting Alice. From a practical perspective, this represents the case when even trusted devices can become noisy based on different uncontrollable factors. Then, the 'imprecise tomography' of the trusted party can not reveal the ideal assemblage $\{\sigma_{b|y}\}$ generated by Bob's measurement. We show later that this would change the local bound depending on the trust in Alice's devices.

Let us now quantify the amount of trust Alice can make about her devices. For this purpose, consider that the local quantum state belongs to $\mathbb{C}^d$ and a set of tomographically-complete observables $\{\sigma_i\}$, where $i=1,\ldots,d^2$ such that $\sigma_1=\I$, that Alice expects her measurement device to perform. Consider again a set of observables $\{\tau_i\}$ that are actually being performed by the measurement device. Without loss of generality, we assume all the above-described observables are unitary. Then the amount of trust of Alice on her measurement device $\Gamma_A$ is quantified as
\begin{eqnarray}
    \Gamma_A=1-\frac{1}{2d k}\max\sum_{i}||\sigma_i-\tau_i||
\end{eqnarray}
where $||A||=\Tr(AA^{\dagger})$ and $0\leq  \Gamma_A\leq 1$ and $k$ is the number of measurements with Alice. 
It is simple to observe from the above formula that if Alice completely trust her device then $\sigma_i=\tau_i$ for all $i$ which gives us the value $\Gamma_A=1$. If Alice completely untrusts her device and knows nothing about her device then this value can reach a value $\Gamma_A=0$. For instance, when $d=2$ instead of performing $\{Z,X,Y\}$ the measurement device performs $\{X,Y,Z\}$. Furthermore, from one of the state-of-the-art experiments for higher dimensional systems \cite{WPD+18} one can naively deduce that the average fidelity $(f_{\mathrm{avg}})$ of the implemented measurements to the ideal ones are $0.96,0.87,0.81$ for $d=2,4,8$ respectively. Consequently, using the formula
\begin{eqnarray}
    \frac{1}{k}\sum_i||\sigma_i-\tau_i||= 2d(1-f_{\mathrm{avg}}) 
\end{eqnarray}
we get the trustness parameter is $\Gamma_A=f_{\mathrm{avg}}$. Thus, it is evident that as dimension of the system grows the trustness in the device reduces significantly.

Consider now that Alice can almost trust her devices, that is, $||\sigma_i-\tau_i||\leq\varepsilon$ for all $i$ and $\Gamma_A=1-\varepsilon/2d$. Then, we establish the following fact.

\begin{fakt}\label{fact1}
Suppose that a source prepares a state $\tilde{\rho}_A\in\mathbb{C}^d$ and sends it to Alice. If Alice can not completely trust her measurement device, that is,  $||\sigma_i-\tau_i||\leq\varepsilon$ for all $i$, then 
\begin{eqnarray}
    ||\rho_A^{\mathrm{inf}}-\tilde{\rho}_A||\leq d^3(d^2-1)\left(\frac{\varepsilon}{2}+\sqrt{2d\varepsilon}\right)^2.
\end{eqnarray}
where $\rho_A^{\mathrm{inf}}$ is the inferred state from Alice's imprecise tomography.
    
\end{fakt}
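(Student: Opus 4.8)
The plan is to model Alice's reconstruction as linear inversion and then propagate the observable error through it. Fix the tomographically complete unitary set $\{\sigma_i\}$ with $\sigma_1=\I$; since it is a frame for the $d^2$‑dimensional matrix space there is a fixed linear map $\mathcal R$ (the dual frame, i.e. the pseudo‑inverse of the frame operator $\sum_i\sigma_i\langle\sigma_i,\cdot\rangle$) with $\tilde\rho_A=\mathcal R\big((\mathrm{Tr}(\sigma_i^\dagger\tilde\rho_A))_i\big)$. Because the device implements $\tau_i$, the numbers actually fed into $\mathcal R$ are $\mathrm{Tr}(\tau_i^\dagger\tilde\rho_A)$ (equivalently, the $\tau_i$‑eigenbasis probabilities relabelled by the eigenvalues of $\sigma_i$), so by linearity
\begin{equation}
\rho_A^{\mathrm{inf}}-\tilde\rho_A=\mathcal R\big((\delta_i)_i\big),\qquad \delta_i:=\mathrm{Tr}\!\big((\tau_i-\sigma_i)^\dagger\tilde\rho_A\big),
\end{equation}
and after imposing $\mathrm{Tr}\tilde\rho_A=1$ only the $d^2-1$ non‑identity components of $(\delta_i)_i$ matter. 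The whole estimate thus reduces to (A) bounding the entries $\delta_i$ of the statistics‑discrepancy vector, and (B) bounding how much $\mathcal R$ amplifies it.

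For (A) I would use unitarity of $\sigma_i$ and $\tau_i$. With $\Delta_i=\tau_i-\sigma_i$, the hypothesis $\mathrm{Tr}(\Delta_i\Delta_i^\dagger)=\|\Delta_i\|\le\varepsilon$ together with $\sigma_i^\dagger\sigma_i=\tau_i^\dagger\tau_i=\I$ gives $2d-2\,\mathrm{Re}\,\mathrm{Tr}(\sigma_i\tau_i^\dagger)\le\varepsilon$, so $\mathrm{Re}\,\mathrm{Tr}(\sigma_i\tau_i^\dagger)\ge d-\varepsilon/2$; diagonalising the unitary $\sigma_i\tau_i^\dagger$ then lets me split $\delta_i$ into an eigenvalue‑mismatch part, controlled by the real‑part deficit $\le\varepsilon/2$, and an eigenvector‑rotation part, controlled via Cauchy–Schwarz on $\mathrm{Im}\,\mathrm{Tr}(\sigma_i\tau_i^\dagger)=\sum_j\sin\theta_j$ by $\sqrt{2d\varepsilon}$; combined with $\|\tilde\rho_A\|_{\mathrm{HS}}=\sqrt{\mathrm{Tr}\tilde\rho_A^2}\le1$ this yields $|\delta_i|\le\varepsilon/2+\sqrt{2d\varepsilon}$ for every $i$, hence $\sum_i|\delta_i|^2\le(d^2-1)\big(\varepsilon/2+\sqrt{2d\varepsilon}\big)^2$. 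For (B), writing $G_{ij}=\mathrm{Tr}(\sigma_i^\dagger\sigma_j)$ one has $\|\rho_A^{\mathrm{inf}}-\tilde\rho_A\|=\mathrm{Tr}\!\big((\rho_A^{\mathrm{inf}}-\tilde\rho_A)(\rho_A^{\mathrm{inf}}-\tilde\rho_A)^\dagger\big)=\vec\delta^{\,\dagger}G^{-1}\vec\delta\le\|G^{-1}\|_{\mathrm{op}}\sum_i|\delta_i|^2$, so it remains to show $\|G^{-1}\|_{\mathrm{op}}\le d^3$ for the tomographic set in use. Putting the two pieces together gives $\|\rho_A^{\mathrm{inf}}-\tilde\rho_A\|\le d^3(d^2-1)\big(\varepsilon/2+\sqrt{2d\varepsilon}\big)^2$, the stated bound — note that the left side is the Hilbert–Schmidt norm \emph{squared}, which is why the right side carries the square.

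The main obstacle is step (B): bounding the amplification factor of the reconstruction. For an arbitrary tomographically complete set of $d^2$ unitaries the dual frame can be arbitrarily ill‑conditioned, so the clean $\lesssim d^3$ estimate needs either a concrete optimal choice (generalized Paulis, or a mutually‑unbiased‑bases tomography) whose Gram spectrum is explicitly known, or a structural argument bounding $\lambda_{\min}(G)$ from below; I would commit to such a choice at the outset so that the constant $d^3$ is actually earned rather than assumed. A secondary subtlety, already visible in step (A), is the bookkeeping that identifies the statistics Alice extracts with $\mathrm{Tr}(\tau_i^\dagger\tilde\rho_A)$ despite the eigenvalue relabelling ambiguity of a unitary observable; this is exactly where the sharper split $\varepsilon/2+\sqrt{2d\varepsilon}$ has to be extracted in place of the looser $|\delta_i|\le\sqrt{\varepsilon}$ that Cauchy–Schwarz would give immediately.
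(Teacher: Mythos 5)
Your proposal is essentially correct and, once its one loose end is tied, actually proves something sharper than the paper does. Both arguments share the same model (linear-inversion tomography with the data $\Tr(\tau_i^\dagger\tilde\rho_A)$ fed into the reconstruction built for $\{\sigma_i\}$) and the same two unitarity-based estimates ($\mathrm{Re}\Tr(\sigma_i\tau_i^\dagger)\geq d-\varepsilon/2$ and the $\sqrt{2d\varepsilon}$ term), but the error propagation is organized differently. The paper expands $\tilde\rho_A=\sum_j \mathrm{r}_j\sigma_j$, bounds each coefficient error $|\mathrm{r}_i-\mathrm{q}_i|$ by controlling every cross overlap $\mathrm{Re}\Tr(\sigma_j\tau_i^\dagger)$, $j\neq i$, individually via the triangle inequality on $||\sigma_j-\tau_i||^{1/2}$, and then sums the $d^2-1$ contributions with $|\mathrm{r}_j|\leq 1$; that summation is precisely where the factor $d^2$ inside the square, hence the $d^3(d^2-1)$ prefactor, originates. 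You instead bound the single scalar $\delta_i=\Tr((\tau_i-\sigma_i)^\dagger\tilde\rho_A)$ and push the vector $\vec\delta$ through the dual frame. Two remarks. First, the obstacle you flag, bounding $||G^{-1}||_{\mathrm{op}}$, is not an obstacle here: the paper's own proof uses $\Tr(\sigma_i\sigma_j^\dagger)=0$ for $i\neq j$ and $||\sigma_i||=2d-2\mathrm{Re}\Tr(\sigma_i\sigma_i^\dagger)+\dots$, i.e.\ it implicitly takes $\{\sigma_i\}$ to be an \emph{orthogonal} unitary operator basis with $\Tr(\sigma_i\sigma_j^\dagger)=d\,\delta_{ij}$, so $G=d\,\I$ and $||G^{-1}||_{\mathrm{op}}=1/d$, far below the $d^3$ you were prepared to need. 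Second, your parenthetical "looser" bound is in fact the tighter one: Cauchy--Schwarz gives $|\delta_i|\leq ||\tau_i-\sigma_i||_F\,||\tilde\rho_A||_F\leq\sqrt{\varepsilon}$, and $\sqrt{\varepsilon}\leq \varepsilon/2+\sqrt{2d\varepsilon}$ always, so the eigenvalue/eigenvector splitting you sketch is unnecessary. Completed this way your route yields $||\rho_A^{\mathrm{inf}}-\tilde\rho_A||\leq \frac{d^2-1}{d}\,\varepsilon$, which implies Fact \ref{fact1} with room to spare (roughly a factor $d^5$ for small $\varepsilon$); the reason the paper's constant is so much larger is exactly the term-by-term expansion over the $r_j$ that your direct Cauchy--Schwarz step avoids.
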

\begin{proof}
    Since, $\{\sigma_i\}$ forms a complete basis for matrices acting on $\mathbb{C}^d$, we can express any $\rho_A$ as $\rho_A=\sum_i\mathrm{r}_i\sigma_i$ and $\rho_A^{\mathrm{inf}}=\sum_i\mathrm{q}_i\sigma_i$. As $\rho_A$ is hermitian, $\mathrm{r}_i,\mathrm{q}_i$ are real and less than or equal to $1$. Computing the distance between the states $\rho_A,\rho_A^{\mathrm{inf}}$ gives us
    \begin{eqnarray}\label{rel11}
        ||\rho_A^{\mathrm{inf}}-\tilde{\rho}_A||=||\sum_i(\mathrm{q}_i-\mathrm{r}_i)\sigma_i||=d\sum_i(\mathrm{q}_i-\mathrm{r}_i)^2.
    \end{eqnarray}
    Thus, we now focus on finding an upper bound on the quantity $|\mathrm{q}_i-\mathrm{r}_i|$ for all $i$. As Alice now performs the measurements $\{\tau_i\}$ instead of $\{\sigma_i\}$, we obtain that
    \begin{eqnarray}
        \mathrm{q}_i=\frac{1}{d}\Tr(\rho_A\tau_i^{\dagger})
    \end{eqnarray}
    which on expanding $\rho_A$ gives us
    \begin{eqnarray}
    \mathrm{q}_i=\frac{1}{d}\sum_j\mathrm{r}_j\Tr(\sigma_j\tau_i^{\dagger})
    \end{eqnarray}
    which on rearranging and then taking the real part gives us
    \begin{eqnarray}
   \mathrm{r}_i-\mathrm{q}_i=\mathrm{r}_i\left(1-\frac{1}{d}\mathrm{Re}\Tr(\sigma_i\tau_i^{\dagger})\right)-\frac{1}{d}\sum_{j\ne i}\mathrm{r}_j\mathrm{Re}\Tr(\sigma_j\tau_i^{\dagger}).
    \end{eqnarray}
    Now, taking the absolute value on both sides and using the triangle inequality gives us
    \begin{equation}\label{rel0}
        |\mathrm{r}_i-\mathrm{q}_i|\leq|\mathrm{r}_i|\left|1-\frac{1}{d}\mathrm{Re}\Tr(\sigma_i\tau_i^{\dagger})\right|+\frac{1}{d}\sum_{j\ne i}|\mathrm{r}_j\mathrm{Re}\Tr(\sigma_j\tau_i^{\dagger})|.
    \end{equation}
     Now recalling that $||\sigma_i-\tau_i||\leq\varepsilon$ and expanding the left-hand side gives us
     \begin{eqnarray}
        2d-2\mathrm{Re}\ \Tr(\sigma_i\tau_i^{\dagger})\leq\varepsilon
     \end{eqnarray}
     which can be rearranged to obtain
     \begin{eqnarray}\label{rel1}
         \mathrm{Re}\ \Tr(\sigma_i\tau_i^{\dagger})\geq d-\frac{\varepsilon}{2}.
     \end{eqnarray}
     As $||\sigma_i-\tau_i||\geq0$, we also obtain that 
     \begin{eqnarray}
         \mathrm{Re}\ \Tr(\sigma_i\tau_i^{\dagger})\leq d.
     \end{eqnarray}
     Thus, we can safely conclude that
     \begin{eqnarray}
         0\leq 1-\frac{1}{d}\mathrm{Re}\Tr(\sigma_i\tau_i^{\dagger})\leq\frac{\varepsilon}{2d}.
     \end{eqnarray}
     Furthermore, we also find an upper bound to $\Tr(\sigma_j\tau_i^{\dagger})$ $(j\ne i)$
     for which we expand the term 
     $ ||\sigma_j-\tau_i||$ and use triangle inequality to obtain
     \begin{eqnarray}
       ||\sigma_j-\sigma_i||^{\frac{1}{2}}-||\sigma_i-\tau_i||^{\frac{1}{2}}\leq  ||\sigma_j-\sigma_i+\sigma_i-\tau_i||^{\frac{1}{2}}\nonumber\\ \leq ||\sigma_j-\sigma_i||^{\frac{1}{2}}+||\sigma_i-\tau_i||^{\frac{1}{2}}.
     \end{eqnarray}
     Using the fact that $\Tr(\sigma_i\sigma_j^{\dagger})=0$, we obtain $||\sigma_j-\sigma_i||=2d$ and consequently we have that
     \begin{eqnarray}
          \sqrt{2d}-\sqrt{\varepsilon}\leq||\sigma_j-\tau_i||^{\frac{1}{2}}\leq  \sqrt{2d}+\sqrt{\varepsilon}.
     \end{eqnarray}
     Expanding the above formula gives us
     \begin{eqnarray}
         \left| \mathrm{Re}\ \Tr(\tau_i\sigma_j^{\dagger})+\frac{\varepsilon}{2}\right|\leq \sqrt{2d\varepsilon}
     \end{eqnarray}
     which on further simplification gives us
     
     \begin{eqnarray}\label{rel2}
         | \mathrm{Re}\ \Tr(\tau_i\sigma_j^{\dagger})|\leq \frac{\varepsilon}{2}+\sqrt{2d\varepsilon}\qquad j\ne i .
     \end{eqnarray}
     Finally, considering Eq. \eqref{rel0} and then using Eqs. \eqref{rel1} and \eqref{rel2} gives us
     \begin{eqnarray}
          |\mathrm{r}_i-\mathrm{q}_i|\leq d\left(\frac{\varepsilon}{2}+\sqrt{2d\varepsilon}\right)
     \end{eqnarray}
     where we also used the fact that $|\mathrm{r}_i|\leq1$ for all $i$. Thus, we obtain from Eq. \eqref{rel11} that
     \begin{eqnarray}
         ||\rho_A^{\mathrm{inf}}-\tilde{\rho}_A||\leq d^3(d^2-1)\left(\frac{\varepsilon}{2}+\sqrt{2d\varepsilon}\right)^2.
     \end{eqnarray}
     This completes the proof.
\end{proof}
It is not surprising to observe from the above result that as the dimension of the state grows, the distance between the actual state and the inferred state grows. However, the interesting fact is that the growth scales up extremely fast with respect to the dimension. It is important to note here that we use triangle inequalities to arrive at the above bound and it might be possible to tighten the bounds using numerical methods. It might also be possible to analytically find better bounds using some different inequalities or better quantifiers of the error in the measurements. For a note, the effect of imprecise tomography for quantum states that are locally qubits was considered in Ref. \cite{Rosset2012}.  

Let us now compute the correction to the local bound if one can not completely trust Alice. For this purpose, we consider the general steering functional $W$ \eqref{stefn} and recall that the assemblage $\{\sigma_{b|y}\}$ is reconstructed by Alice using local tomography. We now denote $\{\tilde{\sigma}_{b|y}\}$ as the actual assemblage and $\{\sigma^{\mathrm{inf}}_{b|y}\}$ as the assemblage inferred by Alice. Let us establish the following fact.

\begin{figure*}[t]
    \centering
    \includegraphics[scale=.5]{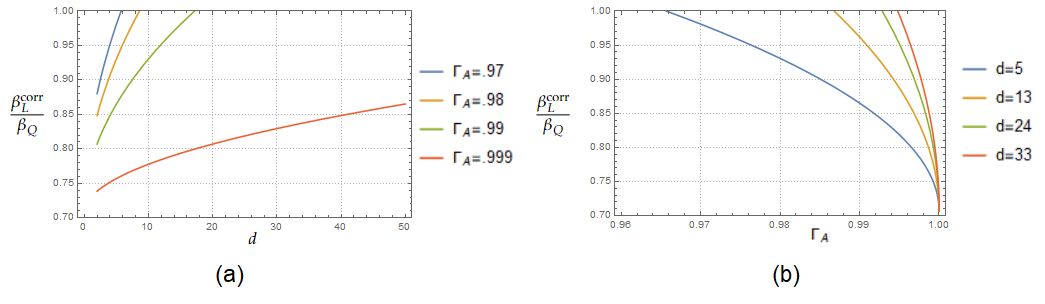}
    \caption{(a)Variation of the upper bound to the ratio of corrected local bound and quantum bound for the steering inequality $\mathcal{B}_d$ with respect to the trust parameter $\Gamma_A$. One can observe that to observe steerability for $d\geq 20$ one has trust Alice for at least $99$ per cent. (b)Variation of the upper bound to the ratio of corrected local bound and quantum bound for the steering inequality $\mathcal{B}_d$ \eqref{SteeIn} with respect to the local dimension $d$ of Alice. One can clearly observe that even trusting Alice $98$ per cent is not enough to witness steerability when $d=10$.   }
    \label{fig2}
\end{figure*}

\begin{fakt}\label{fact2}
   Consider the quantum steering scenario Fig. \ref{fig1} such that Bob sends an assemblage $\{\tilde{\sigma}_{b|y}\}$ to Alice. If Alice can not completely trust her measurement device, that is,  $||\sigma_i-\tau_i||\leq\varepsilon$ for all $i$, then the local bound $\beta_L^{\mathrm{corr}}$ of $W$ due to Alice's imprecise tomography is upper bounded as
\begin{eqnarray}
    \beta_L^{\mathrm{corr}}\leq\beta_L+\mathcal{N}d^2\sqrt{d}\left(\frac{\varepsilon}{2}+\sqrt{2d\varepsilon}\right)
\end{eqnarray}
where $\beta_L$ is the local bound of $W$ if Alice can completely trust her device. Here $\mathcal{N}=\sum_{b,y}p(b|y)\sqrt{\Tr(F_{b|y}F_{b|y}^{\dagger})}$ and $p(b|y)$ is the probability of obtaining outcome $b$ given input $y$ of Bob.
\end{fakt}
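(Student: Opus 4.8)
The plan is to compare the number $W^{\mathrm{inf}}=\sum_{b,y}\Tr(F_{b|y}\sigma^{\mathrm{inf}}_{b|y})$ that Alice actually computes from her imprecise tomography with the number $W^{\mathrm{true}}=\sum_{b,y}\Tr(F_{b|y}\tilde\sigma_{b|y})$ she would compute from the genuine assemblage, and to control the gap using the coefficient estimate already extracted in the proof of Fact~\ref{fact1}.

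First I would fix an arbitrary assemblage $\{\tilde\sigma_{b|y}\}$ admitting a LHS model~\eqref{LHS1} and write each steered state as $\tilde\sigma_{b|y}=p(b|y)\,\tilde\rho_{b|y}$ with $p(b|y)=\Tr(\tilde\sigma_{b|y})$ and $\tilde\rho_{b|y}$ a normalised state on $\mathbb{C}^d$. The crucial point is that $p(b|y)$ is fixed by Bob's reported outcome frequencies alone and is therefore untouched by the imprecision of Alice's device; hence the inferred assemblage takes the form $\sigma^{\mathrm{inf}}_{b|y}=p(b|y)\,\rho^{\mathrm{inf}}_{b|y}$, where $\rho^{\mathrm{inf}}_{b|y}$ is precisely what imprecise tomography reconstructs from the input state $\tilde\rho_{b|y}$. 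Expanding in the operator basis, $\tilde\rho_{b|y}=\sum_i \mathrm{r}_i^{(b,y)}\sigma_i$ and $\rho^{\mathrm{inf}}_{b|y}=\sum_i \mathrm{q}_i^{(b,y)}\sigma_i$, the same chain of inequalities that produced Fact~\ref{fact1} (its derivation used only that the input is a density matrix and that $||\sigma_i-\tau_i||\le\varepsilon$, both of which hold here) gives $|\mathrm{q}_i^{(b,y)}-\mathrm{r}_i^{(b,y)}|\le d\left(\tfrac{\varepsilon}{2}+\sqrt{2d\varepsilon}\right)$ for every $i,b,y$.

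Next I would write $W^{\mathrm{inf}}=W^{\mathrm{true}}+\sum_{b,y}\Tr(F_{b|y}(\sigma^{\mathrm{inf}}_{b|y}-\tilde\sigma_{b|y}))$ and use $W^{\mathrm{true}}\le\beta_L$ because $\{\tilde\sigma_{b|y}\}$ is a LHS assemblage. For the correction term I would substitute $\sigma^{\mathrm{inf}}_{b|y}-\tilde\sigma_{b|y}=p(b|y)\sum_i(\mathrm{q}_i^{(b,y)}-\mathrm{r}_i^{(b,y)})\sigma_i$, apply the triangle inequality, and then Cauchy--Schwarz in the index $i$ together with the Parseval identity $\sum_i|\Tr(F_{b|y}\sigma_i)|^2=d\,\Tr(F_{b|y}F_{b|y}^{\dagger})$, which holds because $\{\sigma_i/\sqrt d\}$ is an orthonormal basis of the matrices acting on $\mathbb{C}^d$. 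Bounding $\sum_i (\mathrm{q}_i^{(b,y)}-\mathrm{r}_i^{(b,y)})^2\le d^2\cdot d^2\left(\tfrac{\varepsilon}{2}+\sqrt{2d\varepsilon}\right)^2$, this yields
\begin{equation}
|\Tr(F_{b|y}(\sigma^{\mathrm{inf}}_{b|y}-\tilde\sigma_{b|y}))|\le p(b|y)\,d^2\sqrt d\left(\tfrac{\varepsilon}{2}+\sqrt{2d\varepsilon}\right)\sqrt{\Tr(F_{b|y}F_{b|y}^{\dagger})}.
\end{equation}
Summing over $b$ and $y$ produces exactly $\mathcal{N}=\sum_{b,y}p(b|y)\sqrt{\Tr(F_{b|y}F_{b|y}^{\dagger})}$, and since the bound is uniform over all LHS assemblages and over all admissible $\{\tau_i\}$ with $||\sigma_i-\tau_i||\le\varepsilon$, maximising over LHS assemblages gives $\beta_L^{\mathrm{corr}}\le\beta_L+\mathcal{N}d^2\sqrt d\left(\tfrac{\varepsilon}{2}+\sqrt{2d\varepsilon}\right)$.

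The main obstacle is not any individual inequality but keeping the powers of $d$ under control: one must avoid bounding $\sum_i|\Tr(F_{b|y}\sigma_i)|$ term by term (which would cost a spurious factor $\sqrt d$) and instead pair Cauchy--Schwarz over the $d^2$ basis indices with Parseval, while also using that $\sqrt{\sum_i(\mathrm{q}_i-\mathrm{r}_i)^2}$ acquires only a factor $d^2$ from the $d^2$ coefficients each of magnitude $\le d(\varepsilon/2+\sqrt{2d\varepsilon})$. A secondary subtlety worth spelling out is the normalisation argument: one must justify that $p(b|y)$ is genuinely independent of Alice's imprecise device, so that the entire tomographic error is confined to the normalised conditional states and Fact~\ref{fact1} applies verbatim.
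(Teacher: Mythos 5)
Your proposal is correct and follows essentially the same route as the paper: the same decomposition $W=W^{\mathrm{true}}+\sum_{b,y}\Tr(F_{b|y}(\sigma^{\mathrm{inf}}_{b|y}-\tilde{\sigma}_{b|y}))$ with $W^{\mathrm{true}}\leq\beta_L$, the same factoring out of the unchanged $p(b|y)$, and the same key estimate, since your Cauchy--Schwarz over the basis index combined with Parseval is exactly the Frobenius-norm H\"older inequality $|\Tr(AB^{\dagger})|\leq||A||_F||B||_F$ that the paper applies together with Fact \ref{fact1} (whose bound $d^3(d^2-1)\leq d^5$ reproduces your $d^2\sqrt{d}$). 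The only cosmetic differences are that you invoke the intermediate coefficient bound from Fact \ref{fact1}'s proof rather than its final statement, and that you take an arbitrary LHS assemblage and maximise at the end instead of assuming one that attains $\beta_L$.
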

\begin{proof}
The idea of the proof is as follows. Consider that Bob prepares an assemblage that attains the local bound of the steering functional $W$. Now, due to the imprecise tomography of Alice, the value of the steering functional might become higher than the local bound $\beta_L$. This would lead Alice to conclude that the assemblage is steerable even if it is not. Thus, we find the upper bound of the steering functional if Alice can not exactly recover the assemblage prepared by Bob. To guarantee the steerability of Bob to Alice, one needs to observe the value higher than the corrected local value $\beta_L^{\mathrm{corr}}$. 

Considering the steering functional $W$ \eqref{stefn} such that the assemblage prepared by Bob $\{\tilde{\sigma}_{b|y}\}$ attains the local bound of $W$ thus giving us
\begin{eqnarray}\label{rel121}
\sum_{b,y}\Tr(F_{b|y}\tilde{\sigma}_{b|y})=\beta_L.
\end{eqnarray}
Now, Alice infers from her tomography that the assemblage is $\{\sigma^{\mathrm{inf}}_{b|y}\}$ thus evaluating $W$ to be
\begin{eqnarray}
W=\sum_{b,y}\Tr(F_{b|y}\sigma^{\mathrm{inf}}_{b|y}).
\end{eqnarray}
The above expression can be rewritten using Eq. \eqref{rel121} as
\begin{eqnarray}\label{rel122}  W=\beta_L+\sum_{b,y}\Tr\left(F_{b|y}\left(\sigma^{\mathrm{inf}}_{b|y}-\tilde{\sigma}_{b|y}\right)\right).
\end{eqnarray}
Recalling that any assemblage can be written as $\sigma_{b|y}=p(b|y)\rho_{b|y}$ where $p(b|y)$ is the probability of obtaining outcome $b$ by Bob when he gives an input $y$. Now, the imprecise tomography of Alice will not change $p(b|y)$. Consequently, utilising fact \ref{fact1} and Holder's inequality for matrix norms given by $|\Tr(AB^{\dagger})|\leq||A||_F||B||_F$ where $||A||_F=\sqrt{||A||}$ for any two matrices $A,B$ gives us
\begin{eqnarray}
W\leq\beta_L+\mathcal{N}d^2\sqrt{d}\left(\frac{\varepsilon}{2}+\sqrt{2d\varepsilon}\right)
\end{eqnarray}
where $\mathcal{N}=\sum_{b,y}p(b|y)\sqrt{\Tr(F_{b|y}F_{b|y}^{\dagger})}$. For simplicity, we used the fact that $d^2-1\leq d^2$ in the above expression. This should be considered as the corrected local bound $\beta_L^{\mathrm{corr}}$ as no assemblage that can be described using an LHS model can cross this bound even if Alice is not completely trusted. This completes the proof.
\end{proof}
Due to the imprecise measurement, one can not conclude that obtaining a value higher than $\beta_L$ is sufficient to witness quantum steering. Thus, to be confident that the assemblage prepared by Bob is steerable one needs to obtain a value higher than $\beta_L^{\mathrm{corr}}$.  It is interesting to note that for any general steering inequality the corrected local bound scales almost as square root of the trust parameter $\Gamma_A$. However, as the dimension of the trusted party grows, the correction term grows more substantially suggesting that observing steering in higher dimensions would require extremely precise measurements by the trusted party. Furthermore, the correction term increases linearly with the number of inputs or outputs of Bob's measurement suggesting the correction is minimum when the number of inputs and outputs of Bob are two. Again let us remark that this bound might is not tight and can be improved via numerical approaches.

In \cite{sarkarcode}, we provide a Mathematica program for finding the bound corrected LHS bound in the simplest scenario corresponding to the case when Alice's system is of dimension two and Bob has two inputs and outputs, that is, $b=y=0,1$. We then follow exactly the same steps as in the proofs of Fact \ref{fact1} and Fact \ref{fact2}. We first find the maximum error in bounding the distance between the ideal state and the inferred state. The program \cite{sarkarcode} is extremely simple and one just needs to input the error $\varepsilon$ in tomography and the parameters of the ideal state. To find the corrected steering bound, one first needs to input the matrices $F_{b|y}$ and the assemblage $\sigma_{b|y}$ that attains the maximum ideal LHS bound. Then, given the error $\varepsilon$, the program returns the corrected local bound. The algorithm can straightaway be generalized to arbitrary dimensions. As an example, we consider the simplest steering inequality where $F_{0|0}=\proj{0}, F_{1|0}=\proj{1},F_{0|1}=\proj{+},F_{1|1}=\proj{-}$ with LHS bound $\beta_L=1+1/\sqrt{2}$. The LHS bound is attained by the assemblage  $\sigma_{0|0}=\proj{\overline{0}}, \sigma_{1|0}=0,\sigma_{0|1}=\proj{\overline{0}},\sigma_{1|1}=0$. Here, $\ket{0},\ket{1}$ denotes the computational basis and $\ket{\pm}=1/\sqrt{2}(\ket{0}\pm\ket{1})$ and $\ket{\overline{0}}=\cos{(\pi/8)}\ket{0}+\sin{(\pi/8)}\ket{1}$. Given arbitrary error in the measurements the program \cite{sarkarcode} returns the tight corrected LHS bound.

Let us now consider a specific family of steering inequalities first constructed in \cite{Horodecki} where the trusted party performs two measurements $Z_d,X_d$ that are mutually unbiased. Mutually unbiased bases for higher dimensional systems have been experimentally implemented \cite{exp1}. Unlike the general case, here Alice does not need to do a full tomography of her assemblage but only perform two measurements on it. Furthermore, the operators $F_{a|x}$ in this case are the measurement operators of Alice. As the number of measurements is lower, so would the correction to the local bound. To find a better upper bound, we evaluate it again without referring to the previous general result, however, the ideas used to find the bound are similar. Here we consider the form of the inequality in the observable picture [see \cite{sarkar6}
for a way to express steering inequality of the form \eqref{stefn} in terms of generalised observables] and compute its corrected local value. The steering inequality is given as

    \begin{eqnarray}\label{SteeIn}
\mathcal{B}_d=\sum_{k=1}^{d-1}\langle Z_d^{(k)}\otimes B_1^{(k)}\rangle+\langle X_d^{(k)}\otimes B_2^{(k)}\rangle
\end{eqnarray}
where 
\begin{equation}
    Z_d=\sum_{i=0}^{d-1}\omega^{i}\ket{i}\!\bra{i},\qquad X_d=\sum_{i=0}^{d-1}\ket{i+1}\!\bra{i}.
\end{equation}

Let us now assume that $||A_1^{(k)}-Z_d^k||\leq\varepsilon$ and $||A_2^{(k)}-X_d^k||\leq\varepsilon$ where $A_1,A_2$ are the observables representing the actual measurements of Alice. The trust parameter $\Gamma_A$ for this case is again $\Gamma_A=1-\varepsilon/2d$. As proven 
 in \cite{sarkar6}, for any inequality of the form
\begin{eqnarray}\label{SteeIn1}
\mathcal{B}_d^{\mathrm{corr}}=\sum_{k=1}^{d-1}\langle A_1^{(k)}\otimes B_1^{(k)}\rangle+\langle A_2^{(k)}\otimes B_2^{(k)}\rangle
\end{eqnarray}
the local bound is given by
\begin{eqnarray}
\beta_L^{\mathrm{corr}}\leq\max_{{\rho}}\sum_{k=1}^{d-1} \left|\langle A_1^{k}\rangle_{\rho}\right|+\left|\langle A_2^{k}\rangle_{\rho}\right|.
\end{eqnarray}
As Alice actually measures $A_1,A_2$ the above bound represents the corrected local bound $\beta_L^{\mathrm{corr}}$. Using triangle inequality we obtain
\begin{eqnarray}
\beta_L^{\mathrm{corr}}\leq\max_{{\rho}}\sum_{k=1}^{d-1} \left|\langle A_1^{k}-Z_d^k\rangle_{\rho}\right|+\left|\langle A_2^{k}-X_d^k\rangle_{\rho}\right|\nonumber\\+\max_{{\rho}}\sum_{k=1}^{d-1} \left|\langle Z_d^k\rangle_{\rho}\right|+\left|\langle X_d^k\rangle_{\rho}\right|.
\end{eqnarray}
The term appearing in the second line of the above formula is in fact the local bound of $\mathcal{B}_d$ and can be inferred from \cite{Horodecki} to be $\sqrt{2}(d-1)$. Evaluating the correction term again using Holder's inequality gives us
\begin{eqnarray}
    \beta_L^{\mathrm{corr}}\leq \max_{{\rho}}\sum_{k=1}^{d-1}(||A_1^{(k)}-Z_d^k||_F+||A_2^{(k)}-X_d^k||_F)||\rho||_F\nonumber\\
    +\sqrt{2}(d-1).\qquad
\end{eqnarray}
Using the fact that $||\rho||_F\leq1$ finally gives us
\begin{eqnarray}\label{bound2}
    \beta_L^{\mathrm{corr}}\leq (d-1)(\sqrt{2}+\sqrt{\varepsilon}).
\end{eqnarray}

To compare the variation of the corrected local bound for different dimensions, we normalise it using the quantum bound of the steering functional \eqref{SteeIn} which is $\beta_Q=2(d-1)$ \cite{sarkar6, Horodecki}. 
In Fig \ref{fig2} we plot the ratio of classical to quantum bound and show that even  $99$ per cent trust with Alice is not convincing to witness any steerability for dimensions greater than $20$, that is the corrected local value becomes equal to the quantum bound. We also find that even for $d=5$ one needs to trust Alice more than $95$ per cent. 
\section{Conclusions}

In this work, we investigate the significance of trust with one of the parties to observe quantum steering. We found analytical upper bounds to the corrected local bound using simple mathematical techniques and found that for any general steering inequality, the corrected LHS bound varies almost square root with respect to the trust but scales up extremely fast with dimension. Thus, observing higher dimensional quantum steering might be much more difficult compared to lower dimensions. In order to find this bound, we also computed the effect of trust in standard quantum tomography. For a remark, it is possible to detect quantum steering where the trust in one party’s measurements is replaced by the trust in that party’s dimension \cite{dimboundsteer1,dimboundsteer2}. This makes it more robust against the errors in the tomography of the assemblages.

Several follow-up problems arise from this work. First, it will be interesting to find other techniques to find the upper bound to the local bound with partially trusted Alice which might be tighter. We believe that for different classes of steering inequalities, one might also find numerical techniques that give much tighter bounds. Secondly, the above techniques might be generalizable to multipartite quantum steering inequalities. Since the problem that is introduced in this work is concerned with the trust one puts in their devices, one can consider similar problems in any semi-device independent scheme, for instance, in prepare and measure scenarios or source independent scenarios. One could also consider the scenario to detect nonlocality using trusted quantum inputs \cite{MDI-STEER1, MDI-STEER} and follow the same steps as described above to relax the trustness on the inputs.  Furthermore, it will be interesting to observe the effects of trust in one-sided device-independent key distribution schemes. Finally, a difficult problem in this regard will be to find steering inequalities for higher dimension systems where the amount of trust is within the current experimental limits.

{\it{Note added.---}} See also a related work \cite{Arminc} that investigates the effect of imprecise measurements by the trusted party on steering inequalities.

\begin{acknowledgments}
 We thank Armin Tavakoli for fruitful discussions. This project was funded within the QuantERA II Programme (VERIqTAS project) that has received funding from the European Union’s Horizon 2020 research and innovation programme under Grant Agreement No 101017733.
\end{acknowledgments}

\providecommand{\noopsort}[1]{}\providecommand{\singleletter}[1]{#1}%

\end{document}